\newtheorem{definition}{Definition}
\newcommand{\be}{\begin{equation}}
\newcommand{\ee}{\end{equation}}
\DeclareMathSymbol{\Lambda}{\mathord}{operators}{"03}
\newtheorem{thm}{Theorem}[section]
\newtheorem{prop}[thm]{Proposition}
\newtheorem{theorem}[thm]{Theorem}
\newtheorem{remark}[thm]{\it Remark}
\newtheorem{example}[thm]{\it Example}
\begin{document}

%\title{Matrix re-factorization problems: Pentagon  and reverse-Pentagon maps}
\title{Entwining tetrahedron  maps}

\author[P. Kassotakis]{Pavlos Kassotakis}
\address{Pavlos Kassotakis, Department of Mathematical Methods in Physics, Faculty of Physics,
University of Warsaw, Pasteura 5, 02-093, Warsaw, Poland}
 \email{Pavlos.Kassotakis@fuw.edu.pl, pavlos1978@gmail.com}

\begin{abstract} 
We present three non-equivalent procedures to obtain {\em entwining (non-constant)  tetrahedron maps}. Given a tetrahedron map, the first procedure incorporates its underlying symmetry group. With the second procedure we obtain several classes of entwining tetrahedron maps by  considering certain compositions of {\em pentagon} with {\em reverse-pentagon maps} which  satisfy certain compatibility relations the so-called {\em ten-term relations}. Using the third procedure, provided that a given  tetrahedron map  admits at least one {\em companion map (partial inverse)}, we
obtain  entwining  set theoretical solutions of the tetrahedron equation.
\end{abstract}

\maketitle

% Uncomment for keywords
\vspace{2pc}
\noindent{\it Keywords}:  Tetrahedron maps, entwining tetrahedron maps, discrete integrable  systems

%\setcounter{tocdepth}{1}

%\tableofcontents

%%%%%%%%%%%%%%%%%%%%%%%%
%%%%%%%%%%%%%%%%%%%%%%%%
%%%%%%%%%%%%%%%%%%%%%%%%

\section{Introduction}
The {\em tetrahedron equation} was introduced by Zamolodchikov  as the factorization condition for the relativistic S-matrix of the scattering of straight strings in $(2+ 1)-$dimensional space-time and in association with a three-dimensional exactly solvable lattice-model \cite{Zamolodjicov_1980,Zamolodchikov:1981}. It  can be realized as a generalization of the Yang-Baxter equation  and explicitly  reads
\begin{align} \label{tetra}
T_{123}T_{145}T_{246}T_{356}=T_{356}T_{246}T_{145}T_{123},
\end{align}
where the  objects ($T_{ijk}$) represent operators or maps; so we  respectively have the ``operator" or the ``set-theoretic" version of (\ref{tetra}). In the set-theoretic version of (\ref{tetra}) the subscripts denote the sets where the map
\begin{align*}
T: \underbrace{\mathbb{X}\times \mathbb{X}\times \mathbb{X}}_{\mathbb{X}^3}\rightarrow \underbrace{\mathbb{X}\times \mathbb{X}\times \mathbb{X}}_{\mathbb{X}^3}
\end{align*}
 acts non-trivially when is acting on $ \mathbb{X}^6$. Solutions of the set-theoretic version of (\ref{tetra}) are called {\em set-theoretical solutions of the tetrahedron equation} or simply {\em tetrahedron maps}. 
 
The first classification results on tetrahedron maps defined on $\mathbb{CP}^1\times \mathbb{CP}^1\times \mathbb{CP}^1,$ were  based on the local Yang-Baxter equation \cite{Maillet:1989a,Maillet:1989} and  appeared in  \cite{Sergeev-1998,Korepanov-1998}. Moreover, the connection of tetrahedron maps of certain type with three-dimensional integrable partial difference equations (P$\Delta$E's) \cite{Nijhoff:1990} was established in \cite{Kashaev:1996} and complemented in \cite{Kass1}; see also \cite{Kassotakis_2018} where compatible three dimensional systems of P$\Delta$E's defined on higher order stencils were associated with tetrahedron maps. For recent developments on tetrahedron  maps see \cite{Doliwa:2020,Igonin:2022,Rizos:2022,Inoue_2024}, while for  further connections of the tetrahedron equation to various areas of mathematics and physics we refer to \cite{Stro:1997,Dimakis:2015,Talalaev_2021}.

In this article we consider the {\em entwining or non-constant} form of (\ref{tetra}), that is 
\begin{align} \label{etetra}
T^{(1)}_{123} T^{(2)}_{145} T^{(3)}_{246} T^{(4)}_{356}=T^{(4)}_{356} T^{(3)}_{246} T^{(2)}_{145} T^{(1)}_{123}.
\end{align}
The superscripts denote that the operators or maps that participate on the equation above might differ. Although the tetrahedron equation (\ref{tetra}) is well studied, its entwining counterpart  (\ref{etetra}) is not. We can  recall though the studies in \cite{Korepanov:2024}, where entwining solutions of a specific kind of the operator form of (\ref{etetra}) are considered. For solutions of the set-theoretic version of (\ref{etetra}), we can only recall  isolated examples, f.i. see \cite{Kassotakis:2019,Kass1}.

In this article we present three non-equivalent procedures to obtain entwining tetrahedron maps. The  first procedure is introduced in Section \ref{sec2.1} where we show how to construct entwining tetrahedron maps from a given tetrahedron map that possesses a group of discrete symmetries. In detail, this procedure serves as an  enrichment of  the procedure of obtaining entwining tetrahedron maps from symmetries that was firstly introduced in \cite{Kassotakis:2019,Kass1}. In Section \ref{sec3},  we show how to obtain entwining tetrahedron maps from a pentagon and a reverse pentagon map that satisfy a consistency relation the so-called {\em ten-term relation} \cite{Kashaev:1998} or a variant of the  ten-term relation that we refer to as {\em opposite ten-term-relation}. We are also extending the aforementioned construction applied to entwining pentagon and reverse-pentagon maps. Finally in Section \ref{sec4},  provided that a tetrahedron map admits at least one companion map (partial inverse) we construct entwining tetrahedron maps. This procedure serves as a direct extension of some of the results of \cite{CKT} to the tetrahedron case.

\section{Tetrahedron maps their symmetries and entwining tetrahderon maps}\label{sec2.1}
 Let $\mathbb{X}$ be a set. We will need the following definitions.
\begin{definition} \label{def01}
Two  maps $S: \mathbb{X} \times \mathbb{X}\times \mathbb{X} \rightarrow \mathbb{X} \times \mathbb{X}\times \mathbb{X}$ and
$T: \mathbb{X} \times \mathbb{X}\times \mathbb{X} \rightarrow \mathbb{X} \times \mathbb{X}\times \mathbb{X}$ will be called
$M\ddot{o}b$ equivalent if there exists a bijection $\phi: \mathbb{X} \rightarrow \mathbb{X}$ such that
$ T\circ  (\phi\times \phi\times \phi)= (\phi\times \phi \times \phi)\circ S.$
\end{definition}

\begin{definition}
A map $T: \mathbb{X} \times \mathbb{X}\times \mathbb{X} \rightarrow \mathbb{X} \times \mathbb{X}\times \mathbb{X},$ will be called {\em tetrahedron map} if it satisfies the {\em tetrahedron equation}
\begin{align*}
T_{123}\circ T_{145}\circ T_{246}\circ T_{356}=T_{356}\circ T_{246}\circ T_{145}\circ T_{123},
\end{align*}
where $T_{ijk},$ $i\neq j\neq k\in \{1,\ldots, 6\},$ denote the maps that act as $T$ on the $i^{th},$ $j^{th},$ and $k^{th},$ components of $\mathbb{X}^6$ and as identity to the remaining ones.
\end{definition}

The importance of the equivalence relation of Definition \ref{def01} is given in the following  proposition.

\begin{prop}
Let $T: \mathbb{X} \times \mathbb{X}\times \mathbb{X} \rightarrow \mathbb{X} \times \mathbb{X}\times \mathbb{X}$ be a tetrahedron map and $S$
a $M\ddot{o}b$ equivalent map to $T$. Then $S$ is also a tetrahedron  map.
\end{prop}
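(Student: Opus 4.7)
The plan is to use the Möb-equivalence to conjugate the tetrahedron equation for $T$ into one for $S$.

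First, I would rewrite Definition \ref{def01} in the solved form
\[
S = (\phi\times\phi\times\phi)^{-1}\circ T\circ(\phi\times\phi\times\phi),
\]
which is valid because $\phi$ is a bijection (so $\phi\times\phi\times\phi$ is too). To lift this to the six-fold product, I would set $\Phi:=\phi^{\times 6}:\mathbb{X}^6\to\mathbb{X}^6$. The key observation, which I would state and briefly justify, is that since $\Phi$ acts factor-wise, it commutes with the ``embedding'' $T\mapsto T_{ijk}$; concretely, for every triple of distinct indices $i,j,k\in\{1,\ldots,6\}$,
\[
S_{ijk} \;=\; \Phi^{-1}\circ T_{ijk}\circ \Phi.
\]
This is the central technical point and the only step that requires a moment's thought: one has to observe that applying $T$ to the $(i,j,k)$-components and identity elsewhere, after pre- and post-composition with $\Phi$, is the same as doing the $\phi$-conjugate map $S$ on those components.

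With this conjugation formula in hand, the rest is a telescoping calculation. I would compute
\[
S_{123}\circ S_{145}\circ S_{246}\circ S_{356}
=\Phi^{-1}\circ T_{123}\circ T_{145}\circ T_{246}\circ T_{356}\circ \Phi,
\]
since the intermediate $\Phi\circ\Phi^{-1}$ factors cancel. The analogous computation gives
\[
S_{356}\circ S_{246}\circ S_{145}\circ S_{123}
=\Phi^{-1}\circ T_{356}\circ T_{246}\circ T_{145}\circ T_{123}\circ \Phi.
\]
Because $T$ satisfies the tetrahedron equation, the inner four-fold compositions coincide, and conjugation by $\Phi$ preserves this equality. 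Hence $S$ satisfies the tetrahedron equation as well.

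The main (and essentially only) obstacle is establishing the conjugation identity $S_{ijk}=\Phi^{-1}\circ T_{ijk}\circ\Phi$; once that is in place the proof is a one-line cancellation. I would therefore devote the bulk of the write-up to making that identity transparent, and conclude with the telescoping computation above.
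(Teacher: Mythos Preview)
Your argument is correct: the conjugation identity $S_{ijk}=\Phi^{-1}\circ T_{ijk}\circ\Phi$ holds exactly as you describe, and the telescoping computation then gives the result. The paper itself does not supply a proof here but simply cites \cite{Kass1}; the standard proof found there is precisely the conjugation argument you have written, so there is nothing further to compare.
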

\begin{proof}
This is a well known Proposition. For a proof see  \cite{Kass1}.
\end{proof}

\begin{definition} \label{def_sym}
A map $\phi: \mathbb{X} \rightarrow \mathbb{X}$ will be called  {\em symmetry} of the  map $T: \mathbb{X} \times \mathbb{X}\times \mathbb{X} \rightarrow \mathbb{X} \times \mathbb{X}\times \mathbb{X}$ if it holds
$T\circ  (\phi\times \phi\times \phi)= (\phi\times \phi \times \phi)\circ T.$
\end{definition}

\begin{remark} \label{rem0}
Let $\tau:\mathbb{X} \times \mathbb{X} \rightarrow \mathbb{X} \times \mathbb{X},$ be the transposition map, that is $\tau:(x,y)\mapsto (y,x).$ Then it is easy to verify the following:
\begin{enumerate}
  \item A map  $T: \mathbb{X} \times \mathbb{X}\times \mathbb{X} \rightarrow \mathbb{X} \times \mathbb{X}\times \mathbb{X}$ is a tetrahedron map iff 
      \begin{align*}
      T^r:=(id\times \tau)\circ (\tau\times id)\circ(id\times \tau)\circ T\circ(id\times \tau)\circ (\tau\times id)\circ(id\times \tau),
       \end{align*}
       is a tetrahedron map. Furthermore, mapping $T$ will be called reversible if it holds 
       \begin{align*}
       T\circ T^r=id.
       \end{align*}
       For the mappings $T_{ijk},$ $i<j<k,$ there is $T^r_{ijk}=\tau_{ik}\circ T_{ijk}\circ \tau_{ik}=T_{kji},$ where as usual $\tau_{ik}$ denotes that map that acts as  $\tau$ in the $i^{th}$ and in the $k^{th}$ factor of $\mathbb{X} \times \mathbb{X}\times \ldots\times \mathbb{X}.$ 
  \item An invertible map  $T: \mathbb{X} \times \mathbb{X}\times \mathbb{X} \rightarrow \mathbb{X} \times \mathbb{X}\times \mathbb{X}$ is a tetrahedron map iff $T^{-1}$ is a tetrahedron map.
  \item A map  $T: \mathbb{X} \times \mathbb{X}\times \mathbb{X} \rightarrow \mathbb{X} \times \mathbb{X}\times \mathbb{X}$ is a tetrahedron map iff $(\phi\times id\times \phi)\circ  T\circ(id\times \phi^{-1}\times id)$ is a tetrahedron map, where  $\phi: \mathbb{X} \rightarrow \mathbb{X}$ is a symmetry of  $T.$ 
\end{enumerate} 
\end{remark}

\begin{definition}
The maps $T^{(a)}: \mathbb{X} \times \mathbb{X}\times \mathbb{X} \rightarrow \mathbb{X} \times \mathbb{X}\times \mathbb{X},$ $a\in\{1,\ldots, 4\},$ will be called {\em entwining tetrahedron map} or {\em non-constant tetrahedron maps} if they satisfy the  {\em entwining (or non-constant)  tetrahedron equation}
\begin{align*}
T_{123}^{(1)}\circ T_{145}^{(2)}\circ T_{246}^{(3)}\circ T_{356}^{(4)}=T_{356}^{(4)}\circ T_{246}^{(3)}\circ T_{145}^{(2)}\circ T_{123}^{(1)},
\end{align*}
where $T_{ijk}^{(a)},$ $i\neq j\neq k\in \{1,\ldots, 6\},$ denote the maps that act as $T^{(a)}$ on the $i^{th},$ $j^{th},$ and $k^{th},$ components of $\mathbb{X}^6$ and as identity to the remaining ones.
\end{definition}

In \cite{Kassotakis:2019,Kass1} it was show how to obtain entwining tetrahedron maps from a given tetrahedron map that admits a symmetry. In what follows we extend the results of the aforementioned articles. We will need the following definition.
\begin{definition}
Let $T: \mathbb{X} \times \mathbb{X}\times \mathbb{X} \rightarrow \mathbb{X} \times \mathbb{X}\times \mathbb{X}$ be a tetrahedron map that admits a symmetry $\phi: \mathbb{X} \rightarrow \mathbb{X}.$  Composite mappings of the form $T_{ijk}\circ \phi_l,$ 
or $\phi_l\circ T_{ijk},$ $l\in\{i,j,k\}$ which participate in an entwining tetrahedron equation, will be called {\em entwining tetrahedron maps of  first-type} and they will be denoted as
\begin{align*}
T_{\overline{i}jk}:=&T_{ijk}\circ \phi_i,&T_{i\overline{j}k}:=&T_{ijk}\circ \phi_j,&T_{ij\overline{k}}:=&T_{ijk}\circ \phi_k,\\
T_{\underline{i}jk}:=&\phi_i\circ T_{ijk},&T_{i\underline{j}k}:=&\phi_j\circ T_{ijk},&T_{ij\underline{k}}:=&\phi_k\circ T_{ijk}.
\end{align*} 
Composite mappings of the form $T_{ijk}\circ \phi_l\circ \phi_m,$ 
or $\phi_l\circ \phi_m\circ T_{ijk},$  or $\phi_l\circ T_{ijk}\circ \phi_m,$ $l\neq m\in\{i,j,k\},$ will be called {\em entwining tetrahedron maps of  second-type} and they will be denoted as
\begin{align*}
T_{\overline{i}j\overline{k}}:=&T_{ijk}\circ \phi_i\circ \phi_k,&T_{\underline{i}\underline{j}k}:=&\phi_i\circ \phi_j\circ T_{ijk}, &T_{\underline{i}\overline{j}k}:=&\phi_i\circ T_{ijk}\circ \phi_j,& \mbox{etc.}
\end{align*}
Similarly are defined {\em entwining tetrahedron maps of  third-type}.
\end{definition}
The next Theorem incorporates entwining tetrahedron maps of first-type. We leave the studies on entwining tetrahedron maps of the second and the third-type for  future work.
\begin{theorem}\label{theo0}
Let $T: \mathbb{X} \times \mathbb{X}\times \mathbb{X} \rightarrow \mathbb{X} \times \mathbb{X}\times \mathbb{X}$ be a tetrahedron map that admits a %\footnote{We assume involutivity in order to avoid degenerate cases f.i. entwining equations that include expressions of the form $T_{\underline{i}jk}\circ T_{\overline{i}mn}$}
 symmetry $\phi: \mathbb{X} \rightarrow \mathbb{X}.$  Then the entwining tetrahedron maps of first-type
$T_{\overline{i}jk},$ $T_{i\overline{j}k},$ $T_{ij\overline{k}},$ $T_{\underline{i}jk},$ $T_{i\underline{j}k},$ and $T_{ij\underline{k}},$
satisfy the following entwining tetrahedron equations
\begin{align}\label{es1}
  T_{123}\circ T_{\overline{1}45}\circ T_{\overline{2}46}\circ T_{\overline{3}56}=& T_{\overline{3}56}\circ T_{\overline{2}46}\circ T_{\overline{1}45}\circ T_{123}, \\ \label{es2}
  T_{123}\circ T_{\underline{1}45}\circ T_{\underline{2}46}\circ T_{\underline{3}56}=& T_{\underline{3}56}\circ T_{\underline{2}46}\circ T_{\underline{1}45}\circ T_{123},\\ \label{es3}
  T_{\overline{1}23}\circ T_{145}\circ T_{2\underline{4}6}\circ T_{3\underline{5}6}=& T_{3\underline{5}6}\circ T_{2\underline{4}6}\circ T_{145}\circ T_{\overline{1}23}, \\ \label{es4}
T_{1\overline{2}3}\circ T_{1\overline{4}5}\circ T_{246}\circ T_{35\underline{6}}=& T_{35\underline{6}}\circ T_{246}\circ T_{1\overline{4}5}\circ T_{1\overline{2}3}, \\ \label{es5}
T_{12\overline{3}}\circ T_{14\overline{5}}\circ T_{24\overline{6}}\circ T_{356}=& T_{356}\circ T_{24\overline{6}}\circ T_{14\overline{5}}\circ T_{12\overline{3}}, \\ \label{es6}
 T_{\underline{1}23}\circ T_{145}\circ T_{2\overline{4}6}\circ T_{3\overline{5}6}=& T_{3\overline{5}6}\circ T_{2\overline{4}6}\circ T_{145}\circ T_{\underline{1}23}, \\ \label{es7}
T_{1\underline{2}3}\circ T_{1\underline{4}5}\circ T_{246}\circ T_{35\overline{6}}=& T_{35\overline{6}}\circ T_{246}\circ T_{1\underline{4}5}\circ T_{1\underline{2}3}, \\ \label{es8}
T_{12\underline{3}}\circ T_{14\underline{5}}\circ T_{24\underline{6}}\circ T_{356}=& T_{356}\circ T_{24\underline{6}}\circ T_{15\underline{5}}\circ T_{12\underline{3}}.
\end{align}
\end{theorem}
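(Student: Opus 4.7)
My plan is to verify each of the eight equations (\ref{es1})--(\ref{es8}) by direct symbolic manipulation, exploiting three elementary facts about operators on $\mathbb{X}^6$: (i) $\phi_l$ commutes with $T_{ijk}$ whenever $l \notin \{i,j,k\}$, since they act on disjoint factors; (ii) $\phi_l$ and $\phi_m$ commute for $l \neq m$; and (iii) the symmetry hypothesis itself, which reads $T_{ijk} \circ \phi_i \circ \phi_j \circ \phi_k = \phi_i \circ \phi_j \circ \phi_k \circ T_{ijk}$.

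The uniform recipe I will follow is: expand the left-hand side of each equation as a word in the $T$'s and $\phi$'s; use (i) and (ii) to transport the three $\phi$-decorations until they form a contiguous block $\phi_i \circ \phi_j \circ \phi_k$ flanking the (uniquely) undecorated factor $T_{ijk}$; invoke (iii) to carry the block across that factor; rearrange the four bare $T$'s with the tetrahedron equation; and then reverse the transport moves to reach the right-hand side. The structural observation that makes this plan coherent is that in each of the eight equations there is exactly one undecorated factor $T_{i_0 j_0 k_0}$ and the set $\{i_0,j_0,k_0\}$ is precisely the set of indices carrying $\phi$-decorations, so the triple on which (iii) must act is always available.

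For (\ref{es1}), (\ref{es2}), (\ref{es5}), and (\ref{es8}) the execution is direct and essentially a restatement of the procedure of \cite{Kassotakis:2019,Kass1}: the three decorations are all pre-compositions or all post-compositions of their $T$'s, so each $\phi_l$ can be walked past the three $T$'s not containing $l$ to form the required contiguous block by a single round of applications of (i). After (iii) at the undecorated $T$ and the tetrahedron equation, the walk is reversed step by step to produce the right-hand side.

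The main obstacle I anticipate lies in (\ref{es3}), (\ref{es4}), (\ref{es6}), and (\ref{es7}): here some $\phi$-decorations are pre-compositions while others are post-compositions, so they straddle the undecorated middle factor ($T_{145}$ in (\ref{es3}), (\ref{es6}) and $T_{246}$ in (\ref{es4}), (\ref{es7})), and (i) alone cannot merge them into a single block. To overcome this I intend to insert compensating pairs $\phi_l \circ \phi_l^{-1}$ (invertibility of $\phi$ being implicit in the paper's use of symmetries, as it is already used in Remark \ref{rem0}(3)) and then appeal to (iii) at the middle factor to transfer one member of each pair across it, after which the same gathering-and-reordering procedure applies. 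Bookkeeping these cancellations case by case is where the work concentrates; no tool beyond (i)--(iii) and the tetrahedron equation enters.
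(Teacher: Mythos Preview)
Your plan is sound for (\ref{es1}), (\ref{es2}), (\ref{es5}), (\ref{es8}), and there it coincides with the paper's argument. For (\ref{es3}), (\ref{es4}), (\ref{es6}), (\ref{es7}) you rightly notice that (i)--(ii) alone cannot collect the three $\phi$'s on one side of the undecorated factor; however, your proposed remedy of inserting $\phi_l\circ\phi_l^{-1}$ and then applying (iii) does not close the gap. Carry out (\ref{es3}): after all commutations available from (i)--(ii) and one use of the tetrahedron relation, the two sides become $T_{123}\circ(\phi_1\,T_{145}\,\phi_4\phi_5)\circ T_{246}\circ T_{356}$ and $T_{123}\circ(\phi_4\phi_5\,T_{145}\,\phi_1)\circ T_{246}\circ T_{356}$. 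However you insert inverse pairs and apply (iii) (or its $\phi^{-1}$ version) at $T_{145}$, you are left needing $\phi_1\,T_{145}\,\phi_4\phi_5=\phi_4\phi_5\,T_{145}\,\phi_1$; assuming $\phi$ invertible, this is equivalent (in the presence of (iii)) to $(\phi^2\times id\times id)\circ T=T\circ(\phi^2\times id\times id)$, which does \emph{not} follow from $(\phi\times\phi\times\phi)\circ T=T\circ(\phi\times\phi\times\phi)$. The same obstruction arises verbatim in (\ref{es4}), (\ref{es6}), (\ref{es7}).

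The obstruction is genuine, not an artefact of the method. For the Hirota map $T_H$ of (\ref{hirota0}), the scaling $\phi(x)=cx$ is a symmetry in the sense of Definition~\ref{def_sym} for every nonzero $c$, yet when $c^2\neq 1$ one computes $\phi_1\,T_H\,\phi_2\phi_3(x,y,z)=\bigl(c^2xy/(x+cz),\,x+cz,\,c^2yz/(x+cz)\bigr)$ while $\phi_2\phi_3\,T_H\,\phi_1(x,y,z)=\bigl(cxy/(cx+z),\,c^2x+cz,\,cyz/(cx+z)\bigr)$; already the second components disagree, so (\ref{es3}) fails outright. Thus identities (\ref{es3}), (\ref{es4}), (\ref{es6}), (\ref{es7}) require the extra hypothesis that $\phi$ be an involution (satisfied by the paper's only example $\phi(x)=-x$), and your proof should state and use that hypothesis explicitly; the paper's ``in a similar manner'' glosses over the same point.
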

\begin{proof}
By using the definition of the symmetry (see Definition \ref{def_sym}) we move all $\phi_i$ to the left or to the right on each equation (\ref{es1})-(\ref{es8}). Then the equation under consideration is verified due to the assumption that $T$ is a tetrahedron map.

Indeed, from (\ref{es1}) we respectively have
\begin{align*}
T_{123}\circ T_{\overline{1}45}\circ T_{\overline{2}46}\circ T_{\overline{3}56}-&T_{\overline{3}56}\circ T_{\overline{2}46}\circ T_{\overline{1}45}\circ T_{123}\\
= T_{123}\circ T_{145}\circ \phi_1\circ T_{246}\circ \phi_2\circ T_{356}\circ \phi_3-&T_{356}\circ \phi_3\circ T_{246}\circ \phi_2\circ T_{145}\circ \phi_1\circ T_{123}\\
=T_{123}\circ T_{145}\circ T_{246}\circ \circ T_{356}\circ \phi_1\circ \phi_2\circ \phi_3-&T_{356}\circ T_{246}\circ  T_{145}\circ \underbrace{\phi_1\circ \phi_2\circ \phi_3\circ T_{123}}_{=T_{123}\circ \phi_1\circ \phi_2\circ \phi_3}\\
=\left(T_{123}\circ T_{145}\circ T_{246}\circ \circ T_{356}-\right.&\left.T_{356}\circ T_{246}\circ  T_{145}\circ\circ T_{123}\right)\circ\phi_1\circ \phi_2\circ \phi_3=0,
\end{align*}
where we have used that $\phi$ is a symmetry and the assumption that  $T$ is a tetrahedron map. In a similar manner follows the proof of the remaining entwining equations (\ref{es2})-(\ref{es8}).
\end{proof}
\begin{remark}
From item $(2)$ of  Remark (\ref{rem0}), we deduce that the inverses of the entwining tetrahedron maps  $T_{\overline{i}jk},T_{i\overline{j}k}, T_{i\underline{j}k}, etc.$ are also entwining tetrahedron maps that satisfy the inverse  equations of (\ref{es1})-(\ref{es8}). 
\end{remark}

\begin{example}
The Hirota map was introduced in \cite{Zelevinsky_1996} c.f. \cite{Korepanov-1998,Sergeev-1998}, it is a tetrahedron map that is related to the Hirota-Miwa equation \cite{Kashaev:1996}. In detail it  reads
\begin{align} \label{hirota0}
T_{H}:\mathbb{CP}^1\times\mathbb{CP}^1\times\mathbb{CP}^1\ni(x,y,z)\rightarrow (u,v,w)\in \mathbb{CP}^1\times\mathbb{CP}^1\times\mathbb{CP}^1,
\end{align}
where
\begin{align*}
  u=&\frac{xy}{x+z},&v=&x+z,&w=&\frac{yz}{x+z},
\end{align*}
and $\mathbb{CP}^1=\mathbb{C}\cup \{\infty\},$ the Reimann sphere.
It was shown in \cite{Kassotakis:2019} that it admits the symmetry $\phi: \mathbb{CP}^1\ni x\mapsto -x \in \mathbb{CP}^1,$ so according to Theorem \ref{theo0}, the enwining tetrahedron maps $T_{\overline{i}jk},$ $T_{i\overline{j}k},$ $T_{ij\overline{k}},$ $T_{\underline{i}jk},$ $T_{i\underline{j}k},$ and $T_{ij\underline{k}},$
satisfy the entwining tetrahedron equations (\ref{es1})-(\ref{es8}). For example, mapping $T:=T_{H},$ entwines with $S:=T\circ (\phi\times id\times id),$ and $R:=(id\times \phi \times id)\circ T,$ to satisfy (\ref{es3}) that reads
\begin{align*}
  S_{123}\circ T_{145}\circ R_{246}\circ R_{356}=&R_{356}\circ R_{246}\circ T_{145}\circ S_{123}.
\end{align*}
 The maps $S$ and $R$ respectively read
\begin{align*}
S:(x,y,z)\mapsto& \left(\frac{xy}{x-z},z-x,\frac{yz}{z-x}\right),& R:(x,y,z)\mapsto & \left(\frac{xy}{x+z},-x-z,\frac{yz}{x+z}\right).
\end{align*}. 
\end{example}
\section{Ten-term relations and entwining tetrahedron maps} \label{sec3}

In this Section first we recall \cite{Kashaev:1998} where it is shown how to construct tetrahedron maps from a pentagon together with  a reverse pentagon map that satisfy a compatibility relation the so-called {\em ten-term relation}. Furthermore, we show how to construct tetrahedron maps by using a variant of the ten-term relation that we refer to as {\em opposite ten-term relation}.  We also show how to construct entwining tetrahedron maps from  pentagon and reverse pentagon maps that satisfy the ten-term relation or its variant.  

%\subsection{{\color{blue} From  pentagon maps to entwining tetrahedron maps}}

First we will need the following definitions.
\begin{definition}[Pentagon and reverse-pentagon maps \cite{Biedenharn:1953,Elliott:1953,Drinfeld_p,MoorSeib:89,Maillet:1990,Kaufmann_1993}]
Let $\mathbb{X}$ be a set. The map $S: \mathbb{X} \times \mathbb{X}\rightarrow  \mathbb{X} \times \mathbb{X},$
 will be called {\em pentagon map} %\cite{Zakrzewski:1992,Skandalis:1993}
 if it satisfies the {\em  pentagon equation}  
\begin{align*}
S_{12}\circ S_{13}\circ S_{23}= S_{23}\circ S_{12},
\end{align*}
while the map $\bar S: \mathbb{X} \times \mathbb{X}\rightarrow  \mathbb{X} \times \mathbb{X},$
 will be called {\em reverse-pentagon map} if it satisfies the {\em  reverse-pentagon equation}
\begin{align*}
\bar S_{23}\circ \bar S_{13}\circ \bar S_{12}= \bar S_{12}\circ \bar S_{23}.
\end{align*}
\end{definition}

\begin{definition}[Ten-term relation \cite{Kashaev:1998}] \label{ten-term}
The maps $W$ and $\bar W$ that map $\mathbb{X} \times \mathbb{X}$ to itself, will be said to satisfy the {\em  ten-term relation} if it holds
\begin{align} \label{ten-term}
\bar W_{13}\circ W_{12}\circ \bar W_{14}\circ W_{34}\circ \bar W_{24}=
W_{34}\circ \bar W_{24}\circ W_{14}\circ \bar W_{13}\circ W_{12}.
\end{align}
\end{definition}

\begin{theorem}[\cite{Kashaev:1998}] \label{Def:Tetra-Penta}
Let $S$ respectively $\bar S$ be a pentagon respectively a reverse pentagon map that satisfy the {\em ten-term} relation (\ref{ten-term}) and they are both invertible. Then the map
\begin{align} \label{tet_ks}
T:=\bar S_{13}\circ\tau_{23}\circ S_{13},
\end{align}
is a  tetrahedron map. 
\end{theorem}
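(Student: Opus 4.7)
My plan is to verify the tetrahedron equation for $T$ by direct algebraic manipulation, treating the pentagon relation for $S$, the reverse-pentagon relation for $\bar S$, and the ten-term relation (\ref{ten-term}) as the three rewriting rules available. As a preliminary, I would rewrite $T_{ijk}$ in a form that separates its nontrivial action from the transposition. Using the identity $\tau_{jk}\circ X_{ik}\circ\tau_{jk}=X_{ij}$, valid for any two-factor map $X$, one obtains the equivalent expressions
\begin{equation*}
T_{ijk}=\bar S_{ik}\circ S_{ij}\circ\tau_{jk}=\tau_{jk}\circ\bar S_{ij}\circ S_{ik}.
\end{equation*}
Substituting these into the four $T$'s on each side of the tetrahedron equation expands both sides into a word in the $\bar S_{ab}$'s, $S_{ab}$'s and transpositions $\tau_{ab}$ acting on $\mathbb{X}^6$.

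The next step is to sweep all transpositions through the word via the conjugation rule $\tau_{ab}\circ X_{cd}\circ\tau_{ab}=X_{\sigma(c)\sigma(d)}$ with $\sigma=(a\,b)$. On each side this collects the four transpositions into a single permutation of the six factors of $\mathbb{X}^6$ while simultaneously reindexing the remaining $S$'s and $\bar S$'s. A routine combinatorial check verifies that the permutations produced on the two sides coincide, so it remains only to show that the two underlying $S$/$\bar S$ words are equal.

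The heart of the proof is this last equality, and the ten-term relation is precisely what forces it. My strategy is to transform the word appearing on the left-hand side into the one appearing on the right via a sequence of elementary rewrites: the ten-term relation (\ref{ten-term}) provides an identity between two specific five-factor alternating patterns of $\bar S$'s and $S$'s, the pentagon relation permits reordering of triples of $S$'s, the reverse-pentagon relation does likewise for triples of $\bar S$'s, and factors whose index sets are disjoint commute freely. Invertibility of $S$ and $\bar S$ guarantees that every rewriting step can be applied in either direction. The main obstacle I anticipate is combinatorial bookkeeping: with four $S$'s and four $\bar S$'s on each side, one must locate the correct five-factor subwords to which (\ref{ten-term}) applies and organise the rewrite sequence so that it terminates in a common normal form rather than cycles. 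Once that is achieved, equality of the two words---and hence the tetrahedron equation for $T$---is immediate.
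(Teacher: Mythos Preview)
Your approach is essentially the same as the paper's: substitute, push all transpositions to one side via the conjugation identities, then reduce the resulting $S$/$\bar S$ words using the pentagon, reverse-pentagon, and ten-term relations. The paper's explicit computation (carried out within the proof of the more general Theorem~\ref{theo2}) shows that the bookkeeping you anticipate is lighter than you suggest---after the transpositions are swept aside, a single application of the reverse-pentagon relation on the left and of the pentagon relation on the right already isolates the ten-term relation (in indices $1,2,3,5$) as the sole remaining obstruction, sandwiched between common factors---and that invertibility of $S$ and $\bar S$ is in fact never invoked.
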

In the theorem above  $\tau:\mathbb{X} \times \mathbb{X}\rightarrow  \mathbb{X} \times \mathbb{X}$ denotes the  transposition map. That is, $\tau_{ij}$ stands for the transposition of the $i-$th and $j-$th arguments of $\mathbb{X}^N,$ $N\in\mathbb{N}$. For example $\tau_{13}: (x,y,z)\mapsto (z,y,x).$

We are now ready to present the main Theorem of this section, where   we extend Theorem  \ref{Def:Tetra-Penta}, so that to incorporate  entwining tetrahedron equations. Also we relax the assumption of invertibility of the maps $S$ and $\bar S$.  We will need the following definition.
\begin{definition}[Opposite ten-term relation] \label{ten-term22}
The maps $W$ and $\bar W$ that map $\mathbb{X} \times \mathbb{X}$ to itself, will be said to satisfy the {\em opposite ten-term relation} if it holds
\begin{align} \label{ten-term2}
 W_{24}\circ \bar W_{34}\circ  W_{14}\circ \bar W_{12}\circ  W_{13}=
\bar W_{12}\circ  W_{13}\circ \bar W_{14}\circ  W_{24}\circ \bar W_{34}.
\end{align}
\end{definition}
\begin{remark}
When there is 
\begin{align}\label{equiv}
W\circ \bar W=\bar W\circ W=id,
\end{align}
then the opposite ten-term relation (\ref{ten-term2}), is equivalent to  the   ten-term relation (\ref{ten-term}).   Note that from (\ref{equiv}) we deduce that either $W$ is invertible with $\bar W:=W^{-1},$  or $W$ is reversible ($W_{ij}\circ W_{ji}=id$) with $\bar W:=\tau\circ W\circ \tau.$
\end{remark}

\begin{theorem} \label{theo2}
   Let $S$ respectively $\bar S$ be a pentagon respectively a reverse-pentagon map that satisfy the  ten-term relation (\ref{ten-term}), 
   or the  opposite ten-term relation (\ref{ten-term2}).
   Then the maps
   \begin{align*}
   T^{(1)}:=&\bar S_{13}\circ\tau_{23}\circ S_{13},& T^{(2)}:=&S_{13}\circ\tau_{12}\circ \bar S_{13},
   \end{align*}
   satisfy the  tetrahedron equations
   \begin{gather}\label{en01}
T^{(1)}_{123}\circ T^{(1)}_{145}\circ T^{(1)}_{246}\circ T^{(1)}_{356}=T^{(1)}_{356}\circ T^{(1)}_{246}\circ T^{(1)}_{145}\circ T^{(1)}_{123},\\ \label{en02}
T^{(2)}_{123}\circ T^{(2)}_{145}\circ T^{(2)}_{246}\circ T^{(2)}_{356}=T^{(2)}_{356}\circ T^{(2)}_{246}\circ T^{(2)}_{145}\circ T^{(2)}_{123}.
\end{gather}
While together with the maps 
\begin{align*}
    T^{(3)}:=&S_{12}\circ\tau_{23}\circ \bar S_{12},& T^{(4)}:=&\bar S_{23}\circ\tau_{12}\circ S_{23},
     \end{align*}
   satisfy the following entwining tetrahedron equations
   \begin{gather} \label{enn03}
   T^{(1)}_{123}\circ T^{(3)}_{145}\circ T^{(3)}_{246}\circ T^{(3)}_{356}=T^{(3)}_{356}\circ T^{(3)}_{246}\circ T^{(3)}_{145}\circ T^{(1)}_{123},\\ \label{enn04} 
   T^{(4)}_{123}\circ T^{(4)}_{145}\circ T^{(4)}_{246}\circ T^{(2)}_{356}=T^{(2)}_{356}\circ T^{(4)}_{246}\circ T^{(4)}_{145}\circ T^{(4)}_{123},\\ \label{enn05} 
T^{(3)}_{123}\circ T^{(3)}_{145}\circ T^{(4)}_{246}\circ T^{(4)}_{356}=T^{(4)}_{356}\circ T^{(4)}_{246}\circ T^{(3)}_{145}\circ T^{(3)}_{123}.
\end{gather}
\end{theorem}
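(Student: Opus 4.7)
The plan is to verify each of the five equations by expanding $T^{(a)}_{ijk}$ in terms of $S_{ij}$, $\bar S_{ij}$ and the transpositions $\tau_{ij}$, then reducing both sides to a common normal form using only four structural inputs: (i) the pentagon relation for $S$; (ii) the reverse-pentagon relation for $\bar S$; (iii) the ten-term relation \eqref{ten-term} (or its opposite variant \eqref{ten-term2}) coupling $S$ with $\bar S$; and (iv) the elementary facts that $\tau_{ij}$ commutes with any $f_{kl}$ on disjoint indices and relabels the subscripts of $f_{kl}$ when it is conjugated past overlapping indices.

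I would begin with equation \eqref{en01}, which is precisely the statement of Theorem \ref{Def:Tetra-Penta} with the invertibility hypothesis dropped. A close inspection of the derivation in \cite{Kashaev:1998} shows that invertibility of $S$ and $\bar S$ is never used: the computation manipulates products of $S$-factors, $\bar S$-factors and transpositions, and it invokes the pentagon, reverse-pentagon and ten-term relations directly. Hence the same argument proves \eqref{en01} under the present hypotheses. To cover the alternative hypothesis that only \eqref{ten-term2} holds, I would observe that the opposite ten-term relation is obtained from the ten-term relation by a natural involution (reversing the order of composition on each side while simultaneously swapping the roles of $S$ and $\bar S$); tracking this involution through Kashaev's computation reproduces the same tetrahedron equation for $T^{(1)}$.

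Equation \eqref{en02} then follows by an internal symmetry. Interchanging $S$ with $\bar S$ turns the pentagon relation into the reverse-pentagon relation and the ten-term relation into its opposite, while replacing $\tau_{23}$ by $\tau_{12}$ in the definition of $T$ merely conjugates the construction by a fixed relabelling of the indices $1\leftrightarrow 3$. Under this substitution $T^{(1)}$ becomes $T^{(2)}$, so \eqref{en02} is proved by exactly the same computation as \eqref{en01} applied to the $S\leftrightarrow \bar S$-flipped data.

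For the entwining equations \eqref{enn03}--\eqref{enn05}, I would expand $T^{(3)}_{ijk}=S_{ij}\circ\tau_{jk}\circ\bar S_{ij}$ and $T^{(4)}_{ijk}=\bar S_{jk}\circ\tau_{ij}\circ S_{jk}$ on the appropriate triples of indices, then push all transpositions past the $S$- and $\bar S$-factors supported on disjoint index pairs so that each side is reorganised into a single long product of $S$'s, $\bar S$'s, and one final block of $\tau$'s. The pentagon, reverse-pentagon and (opposite) ten-term relations can then be applied in carefully chosen places to collapse the two sides onto the same normal form. The main obstacle is purely combinatorial bookkeeping: each side of these three equations expands into a sequence of roughly ten to twelve factors, and the art lies in choosing the order in which to insert the three structural relations so that both sides terminate at the same canonical expression. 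A useful conceptual guide is that $T^{(3)}$ and $T^{(4)}$ differ from $T^{(1)}$ and $T^{(2)}$ only in which index pair of $S$ and $\bar S$ is activated (namely $(1,2)$ and $(2,3)$ rather than $(1,3)$), which is why they entwine with $T^{(1)}$ and $T^{(2)}$ in precisely the patterns \eqref{enn03}--\eqref{enn05} rather than individually satisfying a constant tetrahedron equation.
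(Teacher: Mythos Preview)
Your overall strategy—expand each $T^{(a)}_{ijk}$ in terms of $S$, $\bar S$ and $\tau$'s, push the transpositions to one side via $\tau_{ij}\circ S_{ik}=S_{jk}\circ\tau_{ij}$, and then invoke pentagon, reverse-pentagon and (opposite) ten-term—is exactly what the paper does. Your treatment of \eqref{en01} (deferring to Kashaev and noting invertibility is never used) and your involution argument for the ``or'' clause are sound: the operation ``swap $S\leftrightarrow\bar S$ \emph{and} reverse the order of composition'' does preserve the pentagon/reverse-pentagon hypotheses, interchanges \eqref{ten-term} with \eqref{ten-term2}, fixes $T^{(1)}$, and fixes the tetrahedron equation, so ten-term $\Rightarrow$ \eqref{en01} immediately gives opposite ten-term $\Rightarrow$ \eqref{en01}.

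The symmetry argument you give for \eqref{en02}, however, does not work. Interchanging $S$ with $\bar S$ in the hypotheses turns ``$S$ satisfies pentagon'' into ``$\bar S$ satisfies pentagon'', which is \emph{not} the reverse-pentagon relation for $\bar S$; and conjugating $T^{(1)}=\bar S_{13}\circ\tau_{23}\circ S_{13}$ by the relabelling $1\leftrightarrow 3$ yields $\bar S_{31}\circ\tau_{12}\circ S_{31}$, which is not $T^{(2)}=S_{13}\circ\tau_{12}\circ\bar S_{13}$ unless $\bar S=\tau\circ S\circ\tau$, an assumption you do not have. The paper does not use a symmetry here: it carries out the same direct computation as for \eqref{en01}, and after moving the transpositions and applying one pentagon and one reverse-pentagon identity, \eqref{en02} reduces to the ten-term relation \eqref{ten-term} with the index substitution $(1,2,3,4)\to(2,5,3,6)$. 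You should do that computation explicitly.

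For \eqref{enn03}--\eqref{enn05} you have only described the method. This is the right method (and matches the paper), but the bookkeeping is where the content lies; the paper, for instance, works \eqref{enn03} out in full and shows that after one pentagon and one reverse-pentagon cancellation it collapses directly to \eqref{ten-term} on indices $1,2,3,5$. Your proposal would be complete once you carry out at least one of these three in the same level of detail.
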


\begin{proof}
The proof follows by algebraic manipulations. First we substitute the appropriate  $T^{(i)}$, $i=1,\ldots,4$ to each of the tetrahedron equations  (\ref{en01})-(\ref{enn05}), then by using the identities 
\begin{align} \label{sube}
\tau_{ij}\circ S_{ik}=&S_{jk}\circ \tau_{ij}, &\tau_{ij}\circ S_{ij}\circ\tau_{ij}=&S_{ji}, &i\neq j\neq k,
\end{align}
we move the transpositions $\tau_{ij}$ to the right of the lhs and of the rhs of the tetrahedron equation under consideration. Finally, by using the fact that $S_{ij}$ is a pentagon map and $\bar S_{ij}$ is a reverse pentagon map, we arrive to  the ten-term relation  (\ref{ten-term}) or to the variant ten-term relatiom (\ref{ten-term2}).

In the respect above we now prove that $T^{(1)}$ is a tetrahedron map. In terms of $T^{(1)}$ the tetrahedron equation (\ref{en01}) reads
\begin{multline*}
 \bar S_{13}\circ \tau_{23}\circ  S_{13}\circ \bar S_{15}\circ \tau_{45}\circ S_{15}\circ \bar S_{26}\circ \tau_{46}\circ  S_{26}\circ \bar S_{36}\circ \tau_{56}\circ  S_{36}\\
=\bar S_{36}\circ \tau_{56}\circ  S_{36}\circ \bar S_{26}\circ \tau_{46}\circ  S_{26}\circ \bar S_{15}\circ \tau_{45}\circ S_{15}\circ \bar S_{13}\circ \tau_{23}\circ  S_{13}.
\end{multline*}
We use (\ref{sube}) to move the transpositions $\tau_{ij}$ to the right to obtain
\begin{multline*}
\underbrace{ \bar S_{13}\circ \bar S_{36}}_{=\bar S_{36}\circ \bar S_{16}\circ \bar S_{13}}\circ  S_{12}\circ \bar S_{15}\circ S_{35}\circ \bar S_{25}\circ S_{14}\circ  S_{24}\circ \tau_{23}\circ \tau_{45}\circ \tau_{46}\circ \tau_{56}\\
=\bar S_{36}\circ \bar S_{16}\circ  S_{35}\circ \bar S_{25}\circ S_{15}\circ\bar  S_{13}\circ \underbrace{ S_{24}\circ  S_{12}}_{=S_{12}\circ  S_{14}\circ S_{24}}\circ \tau_{56}\circ \tau_{46}\circ \tau_{45}\circ \tau_{23}.
\end{multline*}
 Using the fact that it holds $\tau_{45}\circ \tau_{46}\circ \tau_{56}=\tau_{56}\circ \tau_{46}\circ \tau_{45},$  and the fact that  $S$ is a pentagon  and $\bar S$ a reverse pentagon map,    the equation above reads
 \begin{align*}
 \bar S_{36}\circ \bar S_{16}\circ\left(\bar S_{13}\circ S_{12}\circ\bar S_{15}\circ S_{35}\circ\bar S_{25}-S_{35}\circ \bar S_{25}\circ S_{15}\circ\bar S_{13}\circ  S_{12}\right)\circ S_{14}\circ  S_{24}\circ K=0,
 \end{align*}
where $K:=\tau_{23}\circ \tau_{45}\circ \tau_{46}\circ \tau_{56},$ that holds true since the expression inside the parenthesis is the  ten-term relation (\ref{ten-term}) (expressed in $1,2,3,5$ indices) that is satisfied according to the assumptions of the theorem. So indeed $T^{(1)}$ is a tetrahedron map.

We now prove that $T^{(2)}$ is a tetrahedron map. In terms of $T^{(2)}$ the tetrahedron equation (\ref{en02}) reads
\begin{multline*}
 S_{13}\circ \tau_{12}\circ \bar S_{13}\circ S_{15}\circ \tau_{14}\circ\bar S_{15}\circ S_{26}\circ \tau_{24}\circ \bar S_{26}\circ S_{36}\circ \tau_{35}\circ \bar S_{36}\\
=S_{36}\circ \tau_{35}\circ \bar S_{36}\circ S_{26}\circ \tau_{24}\circ \bar S_{26}\circ S_{15}\circ \tau_{14}\circ\bar S_{15}\circ  S_{13}\circ \tau_{12}\circ \bar S_{13}.
\end{multline*}
We use (\ref{sube}) to move the transpositions $\tau_{ij}$ to the right to obtain
\begin{multline*}
 S_{13}\circ  S_{16}\circ \bar S_{23}\circ S_{25}\circ\bar S_{26}\circ S_{36}\circ \underbrace{\bar S_{45}\circ \bar S_{56}}_{=\bar S_{56}\circ \bar S_{46}\circ \bar S_{45}}\circ \tau_{12}\circ \tau_{14}\circ \tau_{24}\circ \tau_{35}\\
=\underbrace{S_{36}\circ  S_{13}}_{= S_{13}\circ  S_{16}\circ S_{36}}\circ \bar S_{56}\circ S_{26}\circ\bar S_{23}\circ  S_{25}\circ \bar S_{46}\circ \bar S_{45}\circ \tau_{35}\circ \tau_{24}\circ \tau_{14}\circ \tau_{12}.
\end{multline*}
 Using the fact that it holds $\tau_{12}\circ \tau_{14}\circ \tau_{24}=\tau_{24}\circ \tau_{14}\circ \tau_{12},$  and the fact that  $S$ is a pentagon  and $\bar S$ a reverse pentagon map,    the equation above reads
 \begin{align*}
 S_{13}\circ  S_{16}\circ\left(\bar S_{23}\circ S_{25}\circ\bar S_{26}\circ S_{36}\circ\bar S_{56}-S_{36}\circ \bar S_{56}\circ S_{26}\circ\bar S_{23}\circ  S_{25}\right)\circ\bar S_{46}\circ \bar S_{45}\circ M=0,
 \end{align*}
where $M:=\tau_{12}\circ\tau_{14}\circ \tau_{24}\circ \tau_{35},$ that holds true since the expression inside the parenthesis is the variant ten-term relation (\ref{ten-term}) (expressed in $2,3,5,6$ indices) that is satisfied according to the assumptions of the theorem. So indeed $T^{(2)}$ is a tetrahedron map.
 
  The proof of the rest of the theorem  follows in a similar manner. 
 
  For example, let us use $T^{(1)},T^{(3)}$ to prove  (\ref{enn03}). Working similarly as above, in terms of $T^{(1)},$ and $T^{(3)},$ from the entwining tetrahedron equation (\ref{en04}) we have
%\begin{multline*}
%\bar S_{13}\circ \tau_{23}\circ S_{13}\circ S_{14}\circ \tau_{45}\circ\bar S_{14}\circ S_{24}\circ \tau_{46}\circ \bar S_{24}\circ S_{35}\circ %\tau_{56}\circ \bar S_{35}\\
%=S_{35}\circ \tau_{56}\circ \bar S_{35}\circ S_{24}\circ \tau_{46}\circ \bar S_{24}\circ S_{14}\circ \tau_{45}\circ\bar S_{14}\circ\bar S_{13}\circ %\tau_{23}.
%\end{multline*}
%We use (\ref{sube}) to move the transpositions $\tau_{ij}$ to the right to obtain
\begin{multline*}
\bar S_{13}\circ  \underbrace{S_{12}\circ S_{14}\circ S_{24}}_{=S_{24}\circ S_{12}}\circ\bar S_{15}\circ S_{35}\circ \bar S_{36}\circ \bar S_{25}\circ \tau_{23}\circ \tau_{45}\circ \tau_{46}\circ \tau_{56}\\
=S_{24}\circ  S_{35}\circ \bar S_{25}\circ S_{15}\circ\underbrace{\bar S_{36}\circ \bar S_{16}\circ \bar S_{13}}_{=\bar S_{13}\circ \bar S_{36}}\circ S_{12}\circ \tau_{56}\circ \tau_{46}\circ \tau_{45}\circ \tau_{23},
\end{multline*}
%where we used that $S$ is a pentagon  and $\bar S$ a reverse pentagon map. Using the fact that it holds $\tau_{45}\circ \tau_{46}\circ %\tau_{56}=\tau_{56}\circ \tau_{46}\circ \tau_{45},$ 
The equation above after some cancelations reads
\begin{align*}
\bar S_{13}\circ S_{12}\circ\bar S_{15}\circ  S_{35}\circ \bar S_{25}=S_{35}\circ \bar S_{25}\circ S_{15}\circ \bar S_{13}\circ  S_{12},
\end{align*}
that is equivalent to the ten term relation (\ref{ten-term}).

%Working similarly, we arrive to the conclusion that $T^{(2)}$ is a tetrahedron map provided that  the variant ten-term relation (\ref{ten-term2}) is %satisfied.
%Furthermore, following exactly the same procedure for the remaining equations of (\ref{en01})-(\ref{en08}), we obtain (\ref{ten-term}) or (\ref{ten-term2}) %and that completes the proof.
\end{proof}

\begin{remark}
Note that  the proof that $T^{(1)},$ $T^{(2)}$ are tetrahedron maps in the theorem above , does not requires the invertibility of the pentagon map $S$ and the reverse pentagon map $\bar S$.
\end{remark}
%\begin{remark}
%When the reverse pentagon map $\bar S$ that participates in the Theorem (\ref{theo2}) is the inverse of the pentagon map $S,$ then the variant ten-term %relation (\ref{ten-term2}), is equivalent to  the   ten-term relation (\ref{ten-term}). Actually then (\ref{ten-term2}) coincides with  the inverse of %(\ref{ten-term}). In all other cases, the ten term relations (\ref{ten-term}) and (\ref{ten-term2}) are not equivalent.  
%\end{remark}
When the pentagon map  and the reverse pentagon map satisfy the conditions (\ref{equiv}), additional entwining tetrahedron maps can be generated as the following proposition suggests.
\begin{prop} \label{propsec3}
Consider the maps $T^{(i)},$ $i=1,\ldots, 4$ of Theorem \ref{theo2}. Then the maps
\begin{align*}
\hat T^{(1)}:=&\tau_{23}\circ T^{(3)}\circ \tau_{23}=S_{13}\circ\tau_{23}\circ \bar S_{13},& \hat T^{(2)}:=&\tau_{12}\circ T^{(4)}\circ \tau_{12}=\bar S_{13}\circ\tau_{12}\circ  S_{13},\\
\hat T^{(3)}:=&\tau_{23}\circ T^{(1)}\circ \tau_{23}=\bar S_{12}\circ\tau_{23}\circ S_{12}& \hat T^{(4)}:=&\tau_{12}\circ T^{(1)}\circ \tau_{12}=\bar S_{23}\circ\tau_{13}\circ  S_{23}\\
\hat T^{(5)}:=&\tau_{23}\circ T^{(2)}\circ \tau_{23}=S_{12}\circ\tau_{13}\circ \bar S_{12}, & \hat T^{(6)}:=&\tau_{12}\circ T^{(2)}\circ \tau_{12}=S_{23}\circ\tau_{12}\circ \bar S_{23},
\end{align*} 
 provided that (\ref{equiv}) holds, satisfy the following entwining tetrahedron equations
   \begin{gather} \label{en03}
\hat T^{(1)}_{123}\circ \hat T^{(1)}_{145}\circ \hat T^{(2)}_{246}\circ \hat T^{(2)}_{356}=\hat T^{(2)}_{356}\circ\hat T^{(2)}_{246}\circ\hat T^{(1)}_{145}\circ\hat T^{(1)}_{123},\\ \label{en04}
\hat T^{(3)}_{123}\circ \hat T^{(3)}_{145}\circ \hat T^{(4)}_{246}\circ \hat T^{(4)}_{356}=\hat T^{(4)}_{356}\circ\hat T^{(4)}_{246}\circ\hat T^{(3)}_{145}\circ\hat T^{(3)}_{123},\\ \label{en05}
\hat T^{(5)}_{123}\circ \hat T^{(5)}_{145}\circ \hat T^{(6)}_{246}\circ \hat T^{(6)}_{356}=\hat T^{(6)}_{356}\circ\hat T^{(6)}_{246}\circ\hat T^{(5)}_{145}\circ\hat T^{(5)}_{123}.
\end{gather}
\end{prop}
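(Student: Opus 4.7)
The plan is to verify each of the three equations (\ref{en03}), (\ref{en04}), (\ref{en05}) by direct computation, in the spirit of the proof of Theorem \ref{theo2}. For the equation at hand I would substitute the explicit $S$, $\bar S$, $\tau$ representations of each $\hat T^{(i)}$ (the second equalities in the statement) into both sides. The first phase of simplification is to push every transposition $\tau_{pq}$ to the right end of each product, using the intertwining identities (\ref{sube}), together with the obvious commutation of $\tau_{pq}$ with $S_{rs}$ and $\bar S_{rs}$ when $\{p,q\}\cap\{r,s\}=\emptyset$. Each such move relabels the indices of the $S$ or $\bar S$ it crosses, so care is needed to track the resulting index pattern. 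Crucially, under the standing assumption (\ref{equiv}) any pair $S_{ab}\circ\bar S_{ab}$ or $\bar S_{ab}\circ S_{ab}$ collapses to the identity whenever such a pair meets along a common index set, which enables the intermediate cancellations required for the reduction.

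Once the two sides are reduced to a string of $S$'s and $\bar S$'s multiplied on the right by a composite permutation, I would apply the pentagon equation $S_{ab}\circ S_{ac}\circ S_{bc}=S_{bc}\circ S_{ab}$ and its reverse-pentagon analogue to contract triples on the left-hand side, and symmetrically on the right-hand side. The residual discrepancy between the two sides is then expected to factor through a five-letter word in four indices, namely the ten-term relation (\ref{ten-term}) or the opposite ten-term relation (\ref{ten-term2}). Because (\ref{equiv}) makes these two relations equivalent (as noted in the Remark following Definition \ref{ten-term22}), it does not matter which of the two forms emerges naturally from the simplification: whichever one appears is one of the hypotheses of Theorem \ref{theo2}, so the desired equality follows.

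I expect (\ref{en04}) to be the most illustrative: since $\hat T^{(3)}_{ijk}=\bar S_{ij}\tau_{jk}S_{ij}$ and $\hat T^{(4)}_{ijk}=\bar S_{jk}\tau_{ik}S_{jk}$ differ from $T^{(1)}_{ijk}=\bar S_{ik}\tau_{jk}S_{ik}$ only in which pair of indices supports $S,\bar S$, the emerging four-index configuration on which the ten-term identity is needed can be read off directly. Equation (\ref{en05}) runs in the same way starting from the tetrahedron equation for $T^{(2)}$, and (\ref{en03}) — whose hat-maps derive from $T^{(3)}$ and $T^{(4)}$ — reduces in parallel, only now anchored at the entwining equation (\ref{enn05}) rather than at a plain tetrahedron equation.

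The main obstacle is the combinatorial bookkeeping. Each side of each of (\ref{en03})--(\ref{en05}) contains twelve factors after expansion, and the overlapping $\tau$-indices (for instance, $\tau_{23}$ and $\tau_{26}$ in (\ref{en04})) do not commute, producing 3-cycles whose appearance on the left-hand and right-hand sides must ultimately match. The correct alignment is governed by the braid-type identity $\tau_{ij}\tau_{ik}\tau_{jk}=\tau_{jk}\tau_{ik}\tau_{ij}$ already invoked in the proof of Theorem \ref{theo2}. Without (\ref{equiv}) there would be no means to cancel the spurious $S\bar S$ pairs nor to interchange the ten-term and opposite ten-term relations in the final step, so (\ref{equiv}) plays an essential dual role throughout.
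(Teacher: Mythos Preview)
Your overall plan coincides with the paper's: substitute the $S,\bar S,\tau$ factorisations, push all transpositions to the right via (\ref{sube}), then reduce using the pentagon, reverse-pentagon and ten-term identities. So the strategy is correct.

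There is, however, one point where your expectations are slightly off and where the paper's argument contains a manoeuvre you have not anticipated. You write that under (\ref{equiv}) ``any pair $S_{ab}\bar S_{ab}$ \ldots collapses to the identity whenever such a pair meets'', suggesting that such adjacencies will arise naturally once the transpositions are moved to the right. In fact, for (\ref{en03}) the two sides, after the $\tau$'s are pushed through, become eight-letter words in $S_{ij},\bar S_{ij}$ in which \emph{no} adjacent pair shares both indices, so nothing cancels on its own. The paper's fix is to use (\ref{equiv}) in a stronger way: since $S$ and $\bar S$ are now genuine inverses, one may freely \emph{multiply} the difference of the two sides from the left and right by well-chosen $S_{ij}$ or $\bar S_{ij}$ factors. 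Concretely, after clearing the permutation tail the paper composes with $S_{36}$ on the left and $\bar S_{46}$ on the right; this creates both the pentagon triple $S_{36}S_{13}=S_{13}S_{16}S_{36}$ and the collapsible pairs $S_{36}\bar S_{36}$, $S_{46}\bar S_{46}$, reducing each side from eight letters to five. A second round of left/right multiplication by $\bar S_{15}S_{12}\bar S_{16}$ and $S_{16}\bar S_{15}S_{12}$ then converts the resulting five-letter identity into exactly the ten-term relation (\ref{ten-term}). So (\ref{equiv}) is used not only for cancellation and for identifying (\ref{ten-term}) with (\ref{ten-term2}), but crucially also to justify introducing and removing extra factors at will; without this your reduction would stall. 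Your remark that (\ref{en03}) is ``anchored at the entwining equation (\ref{enn05})'' is best read as an analogy for the pattern of indices rather than as a literal derivation: the paper obtains (\ref{en03}) directly from the ten-term relation, not from (\ref{enn05}).
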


\begin{proof}
Following  the same procedure as in the proof of Theorem \ref{theo2} and by taking into account the relations (\ref{equiv}), we can prove that $\hat T^{(1)}$ and $\hat T^{(2)}$ are entwining tetrahedron maps. Indeed, in terms of $\hat T^{(1)}$ and $\hat T^{(2)},$  by using (\ref{sube}) to move the transpositions $\tau_{ij}$ to the right and the identity $\tau_{45}\circ \tau_{34}\circ \tau_{25}=\tau_{35}\circ \tau_{24}\circ \tau_{45},$  the entwining tetrahedron equation (\ref{en03}) reads
 \begin{align*}
 \left(S_{13}\circ \bar S_{36}\circ \bar S_{12}\circ S_{15}\circ\bar S_{14}\circ S_{56}\circ\bar S_{26}\circ S_{46}-\bar S_{36}\circ  S_{13}\circ S_{56}\circ\bar S_{26}\circ \bar S_{12}\circ S_{15}\circ  S_{46}\circ \bar S_{14}\right)\circ N=0,
 \end{align*}
where $N:=\tau_{23}\circ\tau_{45}\circ \tau_{24}\circ \tau_{35}.$ Omitting the redundant factor $N,$ we act to the equation above with $S_{36}$ from the left and with $\bar S_{46}$ from the right to obtain
\begin{multline*}
\underbrace{S_{36}\circ S_{13}}_{=S_{13}\circ S_{16} \circ S_{36}}\circ \bar S_{36}\circ \bar S_{12}\circ S_{15}\circ\bar S_{14}\circ S_{56}\circ\bar S_{26}\circ \underbrace{S_{46}\circ \bar S_{46}}_{=id}\\
-\underbrace{S_{36}\circ\bar S_{36}}_{=id}\circ  S_{13}\circ S_{56}\circ\bar S_{26}\circ \bar S_{12}\circ S_{15}\circ  S_{46}\circ\underbrace{\bar S_{14}\circ \bar S_{46}}_{\bar S_{46}\circ \bar S_{16}\circ \bar S_{14}}=0.
 \end{multline*}
Using the assumption that  $S$ is a pentagon  and $\bar S$ a reverse pentagon map, as well as (\ref{equiv}), the equation above simplifies to
\begin{align*}
S_{13}\circ\left(S_{16}\circ \bar S_{12}\circ S_{15}\circ S_{56}\circ \bar S_{26}-S_{56}\circ \bar S_{26}\circ\bar S_{12}\circ S_{15}\circ \bar S_{16}\right)\circ \bar S_{14}=0. 
\end{align*}
In the formula above, by acting to the expression inside the parentheses  with $\bar S_{15}\circ S_{12}\circ\bar S_{16}$ from the left and with $ S_{16}\circ \bar S_{15}\circ S_{12}$  from the right we obtain exactly  the ten term relation (\ref{ten-term}) and that completes the proof.

The proof of the rest of the proposition follows in a similar manner so we omit it. 
\end{proof}

\begin{remark}
So far we considered entwining tetrahedron maps associated with constant pentagon maps that satisfy the ten-term or/and the opposite ten-term relation. These results can be generalized by considering entwining (non-constant) pentagon maps that satisfy the non-constant ten-term (see \cite{Dimakis_Korepanov:2021}) or/and the non-constant opposite ten-term relation.  Then entwining  tetrahedron maps are obtained as well. Indeed, the maps
\begin{align*}
T^{(1)}:=&\bar S_{13}^{(\alpha)}\circ \tau_{23}\circ S_{13}^{(\beta)},& T^{(2)}:=&\bar S_{13}^{(\gamma)}\circ \tau_{23}\circ S_{13}^{(\delta)},&
T^{(3)}:=&\bar S_{13}^{(\epsilon)}\circ \tau_{23}\circ S_{13}^{(\zeta)},& T^{(4)}:=&\bar S_{13}^{(\eta)}\circ \tau_{23}\circ S_{13}^{(\theta)},
\end{align*} 
are entwining tetrahedron maps i.e. they satisfy
\begin{align} \label{tet_22}
   T^{(1)}_{123}\circ  T^{(2)}_{145}\circ T^{(3)}_{246}\circ T^{(4)}_{356}=T^{(4)}_{356}\circ T^{(3)}_{246}\circ T^{(2)}_{145}\circ T^{(1)}_{123},
\end{align}
provided that it holds
\begin{align*}
 S_{12}^{(\beta')}\circ  S_{13}^{(\delta)}\circ  S_{23}^{(\theta)}=&S_{23}^{(\zeta)}\circ S_{12}^{(\beta)}, & (\mbox{entwining pentagon equation}) \\
  \bar S_{23}^{(\eta)}\circ \bar S_{13}^{(\gamma)}\circ \bar S_{12}^{(\alpha')}=&\bar S_{12}^{(\alpha)}\circ \bar S_{23}^{(\epsilon)}, & (\mbox{entwining reverse-pentagon equation}) \\
   \bar S_{13}^{(\alpha')}\circ S_{12}^{(\beta)}\circ\bar S_{14}^{(\gamma)}\circ S_{34}^{(\zeta)}\circ\bar S_{24}^{(\eta)}=&S_{34}^{(\theta)}\circ \bar S_{24}^{(\epsilon)}\circ S_{14}^{(\delta)}\circ\bar S_{13}^{(\alpha)}\circ  S_{12}^{(\beta')}=0,& (\mbox{entwining ten-term relation}).
\end{align*}     

Similarly, the maps
 \begin{align*}
T^{(1)}:=& S_{13}^{(\alpha)}\circ \tau_{12}\circ \bar S_{13}^{(\beta)},& T^{(2)}:=& S_{13}^{(\gamma)}\circ \tau_{12}\circ \bar S_{13}^{(\delta)},&T^{(3)}:=& S_{13}^{(\epsilon)}\circ \tau_{12}\circ \bar S_{13}^{(\zeta)},& T^{(4)}:=& S_{13}^{(\eta)}\circ \tau_{12}\circ \bar S_{13}^{(\theta)},%\\
%T^{(3)}:=& S_{13}^{(\epsilon)}\circ \tau_{12}\circ \bar S_{13}^{(\zeta)},& T^{(4)}:=& S_{13}^{(\eta)}\circ \tau_{12}\circ \bar S_{13}^{(\theta)},
\end{align*} 
satisfy (\ref{tet_22}),
provided that it holds
\begin{align*}
 S_{12}^{(\alpha)}\circ  S_{13}^{(\epsilon)}\circ  S_{23}^{(\eta')}=&S_{23}^{(\eta)}\circ S_{12}^{(\gamma)}, & (\mbox{entwining pentagon equation}) \\
  \bar S_{23}^{(\theta')}\circ \bar S_{13}^{(\zeta)}\circ \bar S_{12}^{(\beta)}=&\bar S_{12}^{(\delta)}\circ \bar S_{23}^{(\theta)}, & (\mbox{entwining reverse-pentagon equation}) \\
   \bar S_{12}^{(\beta)}\circ S_{13}^{(\gamma)}\circ\bar S_{14}^{(\zeta)}\circ S_{24}^{(\eta)}\circ\bar S_{34}^{(\theta')}=&S_{24}^{(\eta')}\circ \bar S_{34}^{(\theta)}\circ S_{14}^{(\epsilon)}\circ\bar S_{12}^{(\delta)}\circ  S_{13}^{(\alpha)}=0,& (\mbox{entwining opposite ten-term relation}).
\end{align*}     
Note that equivalence classes of set-theoretical solutions  of the entwining pentagon and the reverse-pentagon equation, are given in \cite{Kassotakis_2023,CKT}.
% This construction is more restrictive than the one presented in this Section. Nevertheless, interesting results arise hence it requires further studies that %we leave for a future publication.
\end{remark}

\begin{example} \label{ex2}
The pentagon map ${S}_I:\mathbb{CP}^1\times \mathbb{CP}^1 \ni(x,y)\mapsto (u,v)\in \mathbb{CP}^1\times \mathbb{CP}^1$ where 
\begin{align*}
u=&\frac{1}{a}\frac{x}{x+y-ayx}, & v=&x+y-ayx, &a\in \mathbb{CP}^1 \; (\mbox{constant}) 
\end{align*}
 was  introduced in \cite{Kashaev:1999}. Mapping ${S}_I$ together with its inverse $\bar{ {S}_I}:=({S}_I)^{-1}$ satisfy both the ten-term and the variant ten term relations. So according to Theorem \ref{theo2}, mappings $T^{(i)},$ $i=1,\ldots, 4$ satisfy the equations (\ref{en01})-(\ref{enn05}), while the maps $\hat T_i,$ $i=1,\ldots,6$ (see Proposition \ref{propsec3})  the additional entwining equations (\ref{en03})-(\ref{en05}), since we have $\bar{ {S}_I}:=({S}_I)^{-1}$.      
 
 In detail,   mapping $T^{(1)}$  reads
\begin{align}\label{ts1}
  T^{(1)}: (x,y,z)\mapsto &(u,v,w)=\left(\frac{xy}{x+z-azx},x+z-azx,\frac{yz(1-ax)}{x+z-a(y+z)x}\right). % \\
  %T^{(2)}: (x,y,z)\mapsto &\left(\frac{xy}{z+axy-a^2 xyz},z+axy-a^2 xyz,\frac{(1-ax)y}{1-a^2xy}\right),
\end{align}
it satisfies (\ref{en01}) so it is a tetrahedron map. Note that for $a\neq 0,$ $a$ can be scaled out and mapping $T^{(1)}$ is the tetrahedron map (case ($\delta$)) of \cite{Korepanov-1998,Sergeev-1998}. For $a=0$ mapping $T^{(1)}$ reads

\begin{align}\label{hirota}
T^{(1)}_H: (x,y,z)\mapsto& \left(\frac{xy}{x+z},x+z,\frac{yz}{x+z}\right),
\end{align}
and
 corresponds to the  Hirota map \cite{Korepanov-1998,Sergeev-1998}. %Note that this non-abelian version of the Hirota map is  different from the one % introduced in \cite{Kass1} c.f. \cite{Doliwa:2020}. Indeed, one can easily show that  $T^{(1)}_H$ is an involution i.e. $\left(T^{(1)}_H\right)^2=id$, a %feature that is not shared with the non-abelian version of the Hirota map in   \cite{Kass1}. Furthermore, mapping $T^{(1)}_H$ satisfies the following %conservation relations in separated variables
 %\begin{align*}
 %uy^{-1}=&xv^{-1},& wy^{-1}=&zv^{-1},&u^{-1}x=&w^{-1}z,&u+v+w=&x+y+z.&
 %\end{align*}
%Similarly to the commutative case,   $T^{(1)}_H$ is equivalent to any pair of the first three relations above, together with the fourth one. 

We conclude this example by the explicit presentation of the tetrahedron map  $T^{(2)}$ and the entwining tetrahedron maps  $T^{(3)}$ and $T^{(4)}$ associated with ${S}_I$. 
\begin{align*}
T^{(2)}: (x,y,z)\mapsto & \left(\frac{y}{a} \frac{1-a^2xz}{y+z-a z (x+y)},axz,\frac{y+z-a z (x+y)}{1-a^2xz}\right),\\
T^{(3)}: (x,y,z)\mapsto & \left(\frac{xy}{z+axy(1-az)},z+axy(1-az), \frac{(1-ax)y}{1-a^2xy}\right),\\
T^{(4)}: (x,y,z)\mapsto &  \left(\frac{y}{a} \frac{1}{y+z-a y z},ax (y+z-a y z),\frac{(1-ax) (y+z-ayz)}{1-a^2 x(y+z-ayz)}\right).
\end{align*}
The mappings above together with $T^{(1)},$ (see  Theorem \ref{theo2}) satisfy the  equations (\ref{en01})-(\ref{enn05}). Furthermore, mappings $\hat T^{(i)},$ $i=1,\ldots, 6,$  (see proposition (\ref{propsec3})), which are associated with ${S}_I$,  serve as solutions of (\ref{en03})-(\ref{en05}).

%We present the remaining entwining tetrahedon maps $T^{(i)}$ in Appendix \ref{app0}.
%We leave to the interested reader the explicit presentation of the remaining entwining tetrahedron maps $T^{(i)}.$ %and or to apply the theorem above to the pentagon ma  
\end{example}

%\subsection{{\color{blue}From entwining pentagon maps to entwining tetrahedron maps}}
%..
\section{Octorational maps and entwining tetrahedron maps} \label{sec4}

The notion of {\em non-degenerate  rational maps} (quadrirational maps) for the maps $R:\mathbb{X}^2\rightarrow \mathbb{X}^2$ was introduced in \cite{Etingof_1999}. A natural extension of the latter notion for maps $R:\mathbb{X}^n\rightarrow \mathbb{X}^n$ was presented in \cite{2n-rat} under the name {\em $2^n-$rational maps}. In what follows, we present the definition of $2^n-$rational maps when $n=3$, so we have the so-called {\em $2^3-$rational} or   {\em octorational maps}.

\begin{definition}[Octorational maps \cite{2n-rat}]
A  map $R: \mathbb{X} \times \mathbb{X}\times \mathbb{X}\ni(x,y,z)\mapsto (u,v,w) \in \mathbb{X} \times \mathbb{X}\times \mathbb{X}$ is called octorational,
if  the map $R$ and the so called {\em companion maps} (or {\em partial inverses}) $c^1R: (u,y,z)\mapsto (x,v,w),$ $
c^2R: (x,v,z)\mapsto (u,y,w),$ and $
c^3R:(x,y,w)\mapsto (u,v,z),$
are all birational maps.
\end{definition}

\begin{theorem} \label{theo3}
Let $T$ be an octorational tetrahedron map. Then the following entwining tetrahedron equations are  satisfied
\begin{gather}\label{en1}
T^{(1)}_{123}\circ T^{(2)}_{145}\circ T^{(2)}_{246}\circ T^{(2)}_{356}=T^{(2)}_{356}\circ T^{(2)}_{246}\circ T^{(2)}_{145}\circ T^{(1)}_{123},\\ \label{en2}
T^{(4)}_{123}\circ T^{(4)}_{145}\circ T^{(4)}_{246}\circ T^{(1)}_{356}=T^{(1)}_{356}\circ T^{(4)}_{246}\circ T^{(4)}_{145}\circ T^{(4)}_{123},\\ \label{en3}
T^{(3)}_{123}\circ T^{(3)}_{145}\circ T^{(1)}_{246}\circ T^{(4)}_{356}=T^{(4)}_{356}\circ T^{(1)}_{246}\circ T^{(3)}_{145}\circ T^{(3)}_{123},\\ \label{en4}
T^{(2)}_{123}\circ T^{(1)}_{145}\circ T^{(3)}_{246}\circ T^{(3)}_{356}=T^{(3)}_{356}\circ T^{(3)}_{246}\circ T^{(1)}_{145}\circ T^{(2)}_{123},\\ \label{en5}
T^{(4)}_{123}\circ T^{(\bar 3)}_{145}\circ T^{(\bar 3)}_{246}\circ T^{(2)}_{356}=T^{(2)}_{356}\circ T^{(\bar 3)}_{246}\circ T^{(\bar 3)}_{145}\circ T^{(4)}_{123},\\ \label{en6}
T^{(3)}_{123}\circ T^{(\bar 4)}_{145}\circ T^{(2)}_{246}\circ T^{(\bar 3)}_{356}=T^{(\bar 3)}_{356}\circ T^{(2)}_{246}\circ T^{(\bar 4)}_{145}\circ T^{(3)}_{123},\\ \label{en7}
T^{(2)}_{123}\circ T^{(2)}_{145}\circ T^{(\bar 4)}_{246}\circ T^{(\bar 4)}_{356}=T^{(\bar 4)}_{356}\circ T^{(\bar 4)}_{246}\circ T^{(2)}_{145}\circ T^{(2)}_{123},
\end{gather}
as well as the inverse tetrahedron equations\footnote{See Appendix \ref{app1} for their explicit presentation} which are defined by the interchange $i\leftrightarrow \bar i,$ $i=1,\ldots,4$ of the superscripts in (\ref{en1})-(\ref{en7}),
where $T^{(1)}:=T,$ $T^{(\bar 1)}:=T^{-1}$ the inverse of $T,$ $T^{(2)}:=c^1T$ the first companion map of $T$, $T^{(3)}:=c^2T$ the second companion map of $T,$ $T^{(4)}:=c^3T$ the third companion map of $T,$ and $T^{(\bar 2)},$ $T^{(\bar 3)},$  $T^{(\bar 4)}$ their respective inverses.
%The corresponding to the equations (\ref{en1})-(\ref{en7}) tetrahedron maps are genuine.
\end{theorem}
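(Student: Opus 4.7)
The plan is to prove each of the entwining equations \eqref{en1}--\eqref{en7} by the same template, which I illustrate for \eqref{en1}. The defining property of the companion map $c^1 T$ is the logical equivalence $c^1 T(u,y,z) = (x,v,w) \iff T(x,y,z) = (u,v,w)$: it simply swaps the first coordinate between input and output of $T$. Accordingly, I rewrite each application of $T^{(2)}_{ijk}=(c^1T)_{ijk}$ in \eqref{en1} as an equation of the form $T(\cdot,\cdot,\cdot) = (\cdot,\cdot,\cdot)$ after introducing the appropriate fresh intermediate variables. Applied to an arbitrary $(a_1,\ldots,a_6) \in \mathbb{X}^6$, both sides of \eqref{en1} then unfold into systems of four $T$-triples each.

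Next, letting $(b_1,b_2,b_3)$ denote the three fresh variables introduced by the three companion applications on the LHS of \eqref{en1}, I apply the original tetrahedron equation for $T$ to the reference input $(b_1,b_2,b_3,a_4,a_5,a_6)$. Unfolding this reference identity produces a system of $T$-triples that matches, position by position, the triples produced by both sides of \eqref{en1}. Each comparison pairs two instances of $T(\cdot,\cdot,\cdot) = (\cdot,\cdot,\cdot)$ that already agree in two of three coordinates; bijectivity of $c^1T$, guaranteed by octorationality, then forces the remaining coordinate to coincide. Chaining the resulting identifications yields the equality of the two six-tuples produced by the LHS and RHS of \eqref{en1}.

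Equations \eqref{en2}--\eqref{en7} are handled by the same strategy: at each of the six coordinates, the companion type appearing in the equation fixes an inversion direction, and the reference input for the tetrahedron identity is chosen to match that pattern. For \eqref{en5}--\eqref{en7} and the inverse entwining equations, I additionally invoke Remark \ref{rem0}(2) (which asserts that $T^{-1}$ is a tetrahedron map whenever $T$ is), together with the standard compatibility between the companion maps of $T$ and those of $T^{-1}$.

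The main obstacle will be combinatorial rather than algebraic: identifying the correct reference input for each of the seven equations and tracking the correspondence of intermediate variables across the unfolded entwining and reference systems. In every case these choices are dictated by the inversion pattern built into the entwining equation, and octorationality ensures that all the intermediate values required for the bijectivity arguments are uniquely defined.
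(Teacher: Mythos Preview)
Your approach is essentially the same as the paper's. What you call the ``reference input'' $(b_1,b_2,b_3,a_4,a_5,a_6)$ is precisely the paper's reparametrized generic point $P_0$: the paper chooses $P_0$ so that its first three coordinates are already written as nested values of $u,v,w$, which amounts to pre-inverting the three companion applications on the LHS. Your ``unfolding into $T$-triples'' and matching against the tetrahedron identity is the paper's component-wise verification via the six scalar identities it labels (\ref{p1})--(\ref{p6}).

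One small point: for \eqref{en5}--\eqref{en7} you say you ``additionally invoke Remark~\ref{rem0}(2)'' and a compatibility between companions of $T$ and of $T^{-1}$. This detour is unnecessary and, as stated, does not quite close the argument, since those equations mix companions of $T$ with inverses of companions in a way that cannot be reduced to the template applied to $T^{-1}$ alone. The cleaner route (which the paper takes) is to observe that $T^{(\bar 3)}$ and $T^{(\bar 4)}$ also unwind directly to $T$-equations---e.g.\ $T^{(\bar 3)}(u,y,w)=(x,v,z)\iff T(x,y,z)=(u,v,w)$---so your original template applies verbatim once you identify the correct reference input; the paper supplies these inputs explicitly in a table. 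Remark~\ref{rem0}(2) is then used only for the seven barred equations in the appendix, exactly as you say.
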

\begin{proof}
The assumption that the mapping $T^{(1)}:(x,y,z)\mapsto \left(u(x,y,z),v(x,y,z),w(x,y,z)\right)$ is~octorational, guarantees the existence of the maps
\begin{gather}\label{oct1}
\begin{aligned}
  T^{(\bar 1)}: \left(u(x,y,z),v(x,y,z),w(x,y,z)\right)\mapsto (x,y,z),\\
   T^{(2)}: \left(u(x,y,z),y,z\right)\mapsto (x,v(x,y,z),w(x,y,z)),  \\
   T^{(\bar 2)}: (x,v(x,y,z),w(x,y,z))\mapsto  \left(u(x,y,z),y,z\right),\\
     T^{(3)}: \left(x,v(x,y,z),z\right)\mapsto (u(x,y,z),y,w(x,y,z)), \\
   T^{(\bar 3)}: (u(x,y,z),y,w(x,y,z))\mapsto \left(x,v(x,y,z),z\right), \\
      T^{(4)}: \left(x,y,w(x,y,z)\right)\mapsto (u(x,y,z),v(x,y,z),z), \\
   T^{(\bar 4)}: (u(x,y,z),v(x,y,z),z)\mapsto \left(x,y,w(x,y,z)\right).
\end{aligned}
\end{gather}
Furthermore, mapping $T^{(1)}$ is a tetrahedron map, that is for a generic point $P_0=(x_1,x_2,x_3,x_4,x_5,x_6)$ it holds.
\begin{align}\label{tt}
T^{(1)}_{123}\left( T^{(1)}_{145}\left( T^{(1)}_{246}\left( T^{(1)}_{356} P_0\right)\right)\right)=T^{(1)}_{356}\left( T^{(1)}_{246}\left( T^{(2)}_{145}\left( T^{(1)}_{123} P_0\right)\right)\right).
\end{align}
To carry on, in order to express the equation above on its components, it is convenient to change the notation. We remove the parentheses on the compositions, we represent the nesting of functions with superscripts and we suppress the variables $x_i$ to $i$. For example, the composed function $u\left(x_1,v\left(x_2,x_4,w\left(x_3,x_5,x_6\right)\right),w\left(x_3,x_5,x_6\right)\right)$ in the new notation reads  $u^{1,v^{2,4,w^{3,5,6}},w^{3,5,6}}$.
Then equation (\ref{tt}) expressed on its components reads
\begin{align}\label{p1}
u^{u^{1,v^{2,4,w^{3,5,6}},v^{3,5,6}},u^{2,4,w^{3,5,6}},u^{3,5,6}}&=u^{u^{1,2,3},4,5},\\ \label{p2}
v^{u^{1,v^{2,4,w^{3,5,6}},v^{3,5,6}},u^{2,4,w^{3,5,6}},u^{3,5,6}}&=u^{v^{1,2,3},v^{u^{1,2,3},4,5},6}\\ \label{p3}
w^{u^{1,v^{2,4,w^{3,5,6}},v^{3,5,6}},u^{2,4,w^{3,5,6}},u^{3,5,6}}&=u^{w^{1,2,3},w^{u^{1,2,3},4,5},w^{v^{1,2,3},v^{u^{1,2,3},4,5},6}},\\ \label{p4}
v^{1,v^{2,4,w^{3,5,6}},v^{3,5,6}}&=v^{v^{1,2,3},v^{u^{1,2,3},4,5},6},\\ \label{p5}
w^{1,v^{2,4,w^{3,5,6}},v^{3,5,6}}&=v^{w^{1,2,3},w^{u^{1,2,3},4,5},w^{v^{1,2,3},v^{u^{1,2,3},4,5},6}},\\ \label{p6}
w^{2,4,w^{3,5,6}}&=w^{w^{1,2,3},w^{u^{1,2,3},4,5},w^{v^{1,2,3},v^{u^{1,2,3},4,5},6}}.
\end{align}
%where we have introduced  a concise notation by removing  the parentheses on the compositions, by representing the nesting of functions with superscripts %and by suppressing the variables $x_i$ to $i.$
%
%(\ref{tetra}) holds, tha in components
%
We now prove that (\ref{en1}) holds, that is
\begin{align}\label{pf1}
T^{(1)}_{123}\left( T^{(2)}_{145}\left( T^{(2)}_{246}\left( T^{(2)}_{356} P_0\right)\right)\right)=T^{(2)}_{356}\left( T^{(2)}_{246}\left( T^{(2)}_{145}\left( T^{(1)}_{123} P_0\right)\right)\right),
\end{align}
for a generic point $P_0=(x_1,x_2,x_3,x_4,x_5,x_6).$
 Since $T$ is octorational, $u$ is a bijection wrt the first argument, $v$ is a bijection wrt the second argument and $w$ is a bijection wrt the third argument. So without loss of generality we may take
 \begin{align*}
 P_0=\left(u\left(x_1,v\left(x_2,x_4,w\left(x_3,x_5,x_6\right)\right),v\left(x_3,x_5,x_6\right)\right),u\left(x_2,x_4,w\left(x_3,x_5,x_6\right)\right),
 u\left(x_3,x_5,x_6\right),x_4,x_5,x_6\right),
 \end{align*}
 %To carry on, it is convenient to change the notation. We remove the parentheses, we represent the nesting of functions with superscripts and we suppress %the variables $x_i$ to $i$. For example, in this notation the point $P_0$ reads
%\begin{align*}
% P_0=\left(u^{1,v^{2,4,w^{3,5,6}},w^{3,5,6}},v^{2,4,w^{3,5,6}},
 %w^{3,5,6},4,5,6\right).
 %\end{align*}
 that in the concise notation reads
 $ P_0=\left(u^{1,v^{2,4,w^{3,5,6}},v^{3,5,6}},u^{2,4,w^{3,5,6}},
 u^{3,5,6},4,5,6\right).$ From the lhs of (\ref{pf1}) we have
 \begin{align*}
   T^{(2)}_{356} P_0=&\left(u^{1,v^{2,4,w^{3,5,6}},v^{3,5,6}},u^{2,4,w^{3,5,6}},3,4,v^{3,5,6},w^{3,5,6}\right):=P_1,\\
 T^{(2)}_{246} P_1=&\left(u^{1,v^{2,4,w^{3,5,6}},v^{3,5,6}},2,3,v^{2,4,w^{3,5,6}},v^{3,5,6},w^{2,4,w^{3,5,6}}\right):=P_2,\\
 T^{(2)}_{145} P_2=&\left(1,2,3,v^{1,v^{2,4,w^{3,5,6}},v^{3,5,6}},w^{1,v^{2,4,w^{3,5,6}},v^{3,5,6}},w^{2,4,w^{3,5,6}}\right):=P_3,\\
 T^{(1)}_{123} P_3=&\left(u^{1,2,3},v^{1,2,3},w^{1,2,3},v^{1,v^{2,4,w^{3,5,6}},v^{3,5,6}},w^{1,v^{2,4,w^{3,5,6}},v^{3,5,6}},w^{2,4,w^{3,5,6}}\right):=P_l.
 \end{align*}
 While from the rhs of (\ref{pf1}) we get
 \begin{align*}
 T^{(1)}_{123} P_0=&\left(u^{K},v^{K},w^{K},4,5,6\right):=P_1',&\mbox{where} &&K:=u^{1,v^{2,4,w^{3,5,6}},v^{3,5,6}},u^{2,4,w^{3,5,6}},u^{3,5,6}.
 \end{align*}
 Due to (\ref{p1}),(\ref{p2}) the point $P_1'$ equivalently reads $P_1'=\left(u^{u^{1,2,3},4,5},u^{v^{1,2,3},v^{u^{1,2,3},4,5},6},w^{K},4,5,6\right).$ So we have
 \begin{align*}
   T^{(2)}_{145} P_1'=& \left(u^{1,2,3},u^{v^{1,2,3},v^{u^{1,2,3},4,5},6},w^{K},v^{u^{1,2,3},4,5},w^{u^{1,2,3},4,5},6\right):=P_2',
 \end{align*}
 \begin{align*}
   T^{(2)}_{246} P_2'=& \left(u^{1,2,3},v^{1,2,3},w^{K},v^{v^{1,2,3},v^{u^{1,2,3},4,5},6},w^{u^{1,2,3},4,5},w^{v^{1,2,3},v^{u^{1,2,3},4,5},6}\right):=P_3',
 \end{align*}
using (\ref{p3}) we have $w^K=u^{w^{1,2,3},w^{u^{1,2,3},4,5},w^{v^{1,2,3},v^{u^{1,2,3},4,5},6}}$ so
\begin{align*}
T^{(2)}_{356} P_3'=&\left(u^{1,2,3},v^{1,2,3},w^{1,2,3},v^{v^{1,2,3},v^{u^{1,2,3},4,5},6},v^\Lambda, w^\Lambda\right):=P_r,
\end{align*}
where $\Lambda:=w^{1,2,3},w^{u^{1,2,3},4,5},w^{v^{1,2,3},v^{u^{1,2,3},4,5},6}.$
Using (\ref{p4})-(\ref{p6}), the points $P_l$ and $P_r$ coincide that serves as a proof that (\ref{en1}) holds.

 The following table presents the generic points $P_0$ that can be used  for the proof of the remaining  entwining tetrahedron equations (\ref{en2})-(\ref{en7}). The proof follows similarly as above so we omit  it.

 \begin{tblr}{c|l} % \hline
      Label of the entwining\\ tetrahedron equation & {}  \\  %[2.5ex]
      \hline
   (\ref{en2}) & $P_0=\left(1,2,w^{1,2,3},4,w^{u^{1,2,3},4,5},w^{v^{1,2,3},v^{u^{1,2,3},4,5},w^{3,5,6}}\right)$  \\  [2.5ex]
   (\ref{en3}) & $P_0=\left(1,v^{1,2,3},3,v^{u^{1,2,3},4,5},5,w^{3,5,6}\right)$ \\ [2.5ex]
   (\ref{en4}) & $P_0=\left(u^{1,2,3},2,3,v^{2,4,w^{3,5,6}},v^{3,5,6},6\right)$ \\ [2.5ex]
   (\ref{en5}) & {$P_0=\left(u^{1,v^{2,4,w^{3,5,6}},v^{3,5,6}},u^{2,4,w^{3,5,6}},u^{w^{1,2,3},w^{u^{1,2,3},4,5},w^{v^{1,2,3},v^{u^{1,2,3},4,5},6}},\right.$\\
   \hspace{7cm}$\left.4, w^{u^{1,2,3},4,5},w^{v^{1,2,3},v^{u^{1,2,3},4,5},6}\right)$} \\ [2.5ex]
   (\ref{en6}) & $P_0=\left(u^{1,v^{2,4,w^{3,5,6}},v^{3,5,6}},u^{v^{1,2,3},v^{u^{1,2,3},4,5},6},u^{3,5,6},v^{u^{1,2,3},4,5},5,w^{3,5,6}\right)$ \\ [2.5ex]
   (\ref{en7}) & $P_0=\left(u^{u^{1,2,3},4,5},u^{2,4,w^{3,5,6}},u^{3,5,6},v^{2,4,w^{3,5,6}},v^{3,5,6},6\right)$ \\
   %\hline
 \end{tblr}
Finally, the proof that (\ref{enn1})-(\ref{enn7}) holds true, follows from item $(2)$ of Remark \ref{rem0}
\end{proof}
The following remarks are in order.
\begin{itemize}
\item In the first four entwining equations (\ref{en1})-(\ref{en4}) and in their associated inverse equations, one of the maps ($T^{(1)}$ and $T^{(\bar 1)}$ in the inverse equations) is a tetrahedron map.
\item In the remaining entwining equations (\ref{en5})-(\ref{en7}) and in their associated inverse equations, in the generic case, none of the maps that participate  is a tetrahedron map.
\item Stronger versions of the Theorem    (\ref{theo3}) can be proven when at least one of the companion maps does not exist. Then the resulting  Theorem reads like Theorem    (\ref{theo3}) but without the entwining relation that the non-existing companion map participates.
\end{itemize}
\begin{example}
The extension of the Hirora map (\ref{hirota0}) on division rings was introduced in \cite{Kass1} c.f. \cite{Doliwa:2020}. Explicitly it reads
\begin{align} \label{hirotaq}
T_{H}:\mathcal{A}\times\mathcal{A}\times\mathcal{A}\ni(x,y,z)\rightarrow (u,v,w)\in \mathcal{A}\times\mathcal{A}\times\mathcal{A},
\end{align}
where
\begin{align*}
  u=&(x+z)^{-1}xy,&v=&x+z,&w=&(x+z)^{-1}zy.
\end{align*}
with $\mathcal{A}$ a division ring. This non-abelian version of the Hirota map is invertible with inverse that reads
\begin{align*}
(T_{H})^{-1}:(x,y,z)\mapsto \left(yx(x+z)^{-1},x+z,yz(x+z)^{-1}\right).
\end{align*}
Furthermore, it admits a first companion map $ c^1T_H$ and its inverse,  a third companion map $ c^3T_H$ and their respective inverses, while its second companion $c^2T_H$  does not exist.
Explicitly the companion maps of the non-abelian Hirota map and their inverses   read:
\begin{align*}
 c^1T_H:(x,y,z)\mapsto & \left(zx(y-x)^{-1},zy(y-x)^{-1},y-x\right),\\
\left(c^1T_H\right)^{-1}:(x,y,z)\mapsto &\left((y-x)^{-1}xz,(y-x)^{-1}yz,y-x\right), \\
 c^3T_H:(x,y,z)\mapsto &\left(y-z,xy(y-z)^{-1},xz(y-z)^{-1}\right), \\
  \left(c^3T_H\right)^{-1}:(x,y,z)\mapsto &\left(y-z,(y-z)^{-1}yx,(y-z)^{-1}zx\right).
\end{align*}

% Also it holds 
%\begin{align*}
%\left(T^{(1)}_H\right)^2=\left(c^1T^{(1)}_H\right)^2=\left(c^3T^{(1)}_H\right)^2=id,
%\end{align*}
%so $\left(T^{(1)}_H\right)^{-1}\equiv T^{(1)}_H,$ $\left(c^iT^{(1)}_H\right)^{-1}\equiv c^iT^{(1)}_H,$ $i=1,3$. 

Following the notation of the Theorem \ref{theo3} there is 
\begin{align*}
T^{(1)}:=&T_H,&T^{(\bar 1)}:=&(T_H)^{-1},&T^{(2)}:=&c^1T_H,&T^{(\bar 2)}:=&(c^1T_H)^{-1},&T^{(4)}:=&c^3T_H,&T^{(\bar 4)}:=&(c^3T_H)^{-1},
\end{align*}
which satisfy (\ref{en1}),(\ref{en2}),(\ref{en7}),(\ref{enn1}),(\ref{enn2}) and (\ref{enn7}) entwining tetrahedron equations.

For examples of tetrahedron maps with the full octorational symmetry we refer to \cite{Dimakis:2019,Rizos_2020}. It is easy to verify that these tetrahedron maps their inverses, their  associated three companions maps together with their inverses, all exist and differ. So according to Theorem \ref{theo3}, they satisfy all fourteen entwining tetrahedron equations (\ref{en1})-(\ref{en7}), (\ref{enn1})-(\ref{enn7}).%  We leave the explicit presentation of their companion maps to the interested reader.
\end{example}

\section*{Acknowledgements}
\parbox{.135\textwidth}{\begin{tikzpicture}[scale=.03]
\fill[fill={rgb,255:red,0;green,51;blue,153}] (-27,-18) rectangle (27,18);
\pgfmathsetmacro\inr{tan(36)/cos(18)}
\foreach \i in {0,1,...,11} {
\begin{scope}[shift={(30*\i:12)}]
\fill[fill={rgb,255:red,255;green,204;blue,0}] (90:2)
\foreach \x in {0,1,...,4} { -- (90+72*\x:2) -- (126+72*\x:\inr) };
\end{scope}}
\end{tikzpicture}} \parbox{.85\textwidth}
{This research is part of the project No. 2022/45/P/ST1/03998  co-funded by the National Science Centre and the European Union Framework Programme
 for Research and Innovation Horizon 2020 under the Marie Sklodowska-Curie grant agreement No. 945339. For the purpose of Open Access, the author has applied a CC-BY public copyright licence to any Author Accepted Manuscript (AAM) version arising from this submission.}

\appendix

%\section{The entwining tetrahedron maps associated with mapping $\mathfrak{S}_I$ and its inverse}\label{app0}

%The entwining tetrahedron maps $T^{(i)},$ $i=1,\ldots 10$ obtained by applying Theorem \ref{theo2} to the pentagon map $\mathfrak{S}_I$ and its inverse %$\bar{\mathfrak{S}_I}$ respectively read
%\begin{align*}
%T^{(1)}: (x,y,z)\mapsto &\left(x(x+z-azx)^{-1}y,x+z-azx,z(1-ax)(x+z-a(y+z)x)^{-1}y\right),\\
%T^{(2)}: (x,y,z)\mapsto &
%\end{align*}

\section{Inverse entwining tetrahedron equations}\label{app1}
The entwining tetrahedron equations (\ref{en1})-(\ref{en7}) are accompanied with the inverse entwining tetrahedron equations which are obtained by the interchange $\i\leftrightarrow \bar i,$ $i=1,\ldots,4$ of the superscripts on (\ref{en1})-(\ref{en7}). Explicitly they read
\begin{gather}\label{enn1}
T^{(\bar 1)}_{123}\circ T^{(\bar 2)}_{145}\circ T^{(\bar 2)}_{246}\circ T^{(\bar 2)}_{356}=T^{(\bar 2)}_{356}\circ T^{(\bar 2)}_{246}\circ T^{(\bar 2)}_{145}\circ T^{(\bar 1)}_{123},\\ \label{enn2}
T^{(\bar 4)}_{123}\circ T^{(\bar 4)}_{145}\circ T^{(\bar 4)}_{246}\circ T^{(\bar 1)}_{356}=T^{(\bar 1)}_{356}\circ T^{(\bar 4)}_{246}\circ T^{(\bar 4)}_{145}\circ T^{(\bar 4)}_{123},\\ \label{enn5}
T^{(\bar 3)}_{123}\circ T^{(\bar 3)}_{145}\circ T^{(\bar 1)}_{246}\circ T^{(\bar 4)}_{356}=T^{(\bar 4)}_{356}\circ T^{(\bar 1)}_{246}\circ T^{(\bar 3)}_{145}\circ T^{(\bar 3)}_{123},\\ \label{enn6}
T^{(\bar 2)}_{123}\circ T^{(\bar 1)}_{145}\circ T^{(\bar 3)}_{246}\circ T^{(\bar 3)}_{356}=T^{(\bar 3)}_{356}\circ T^{(\bar 3)}_{246}\circ T^{(\bar 1)}_{145}\circ T^{(\bar 2)}_{123},\\ \label{enn3}
T^{(\bar 4)}_{123}\circ T^{(3)}_{145}\circ T^{(3)}_{246}\circ T^{(\bar 2)}_{356}=T^{(\bar 2)}_{356}\circ T^{(3)}_{246}\circ T^{(3)}_{145}\circ T^{(\bar 4)}_{123},\\ \label{enn4}
T^{(\bar 3)}_{123}\circ T^{(4)}_{145}\circ T^{(\bar 2)}_{246}\circ T^{(3)}_{356}=T^{(3)}_{356}\circ T^{(\bar 2)}_{246}\circ T^{(4)}_{145}\circ T^{(\bar 3)}_{123},\\  \label{enn7}
T^{(\bar 2)}_{123}\circ T^{(\bar 2)}_{145}\circ T^{(4)}_{246}\circ T^{(4)}_{356}=T^{(4)}_{356}\circ T^{(4)}_{246}\circ T^{(\bar 2)}_{145}\circ T^{(\bar 2)}_{123}.
\end{gather}

%\bibliographystyle{unsrt}
%\bibliography{ref-tet}

\end{document}